\newcommand{\expect}[1]{{\mathbb{E}\left[{#1}\right]}}
\newcommand{\pexpect}[1]{{\mathbb{E}^o\left[{#1}\right]}}
\newcommand{\cexpect}[2]{{\mathbb{E}_{#2}\left[{#1}\right]}}
\newcommand{\pr}{{\mathbb{P}}}
\newcommand{\ppr}{{\mathbb{P}^o}}
\newcommand{\set}[1]{{\mathcal{#1}}}
\newcommand{\ple}[1]{{\alpha_{#1}}}
\newcommand{\power}[1]{\mathrm{P}_{#1}}
\newcommand{\dnsty}[1]{{\lambda_{#1}}}
\newcommand{\tdnsty}[1]{{\tilde{\lambda}_{#1}}}
\newcommand{\R}{{\mathbb{R}}}
\newcommand{\uth}[1]{#1^\text{th}}
\newcommand{\SINR}{\mathtt{SINR}}
\newcommand{\SIR}{\mathtt{SIR}}
\newcommand{\assocr}{\mathcal{C}}
\newcommand{\cc}{\kappa}
\newcommand{\ccg}{\zeta}
\newcommand{\PPP}[1]{\Phi_{#1}}
\newcommand{\tPPP}[1]{\tilde{\Phi}_{#1}}
\newcommand{\mk}{M}
\newcommand{\indic}{1\hspace{-2mm}{1}}
\newcommand{\real}[1]{\mathbb{R}^{#1}}
\newcommand{\tmap}[1]{J(#1)}
\newcommand{\passoc}{\mathcal{A}}
\newcommand{\pkexpect}[2]{{\mathbb{E}^{o,#2}\left[{#1}\right]}}
\newcommand{\pkpr}[1]{{\mathbb{P}^{o,#1}}}
\newcommand{\om}{\omega}
\newtheorem{cor}{Corollary}
\newtheorem{lem}{Lemma}
\newtheorem{prop}{Proposition}
\theoremstyle{definition}
\newtheorem{definition}{Definition}
\theoremstyle{remark}
\newtheorem{rem}{Remark}
\begin{document}
\title{ On Association Cells in Random Heterogeneous Networks}

\author{Sarabjot Singh, Fran\c{c}ois Baccelli, and Jeffrey G. Andrews 
\thanks{This work has been supported by the Intel-Cisco Video Aware Wireless Networks (VAWN) Program and NSF grant CIF-1016649. The authors are with Dept. of Electrical and Computer Engineering at the University of Texas, Austin  (email:  sarabjot@utexas.edu, baccelli@math.utexas.edu, and jandrews@ece.utexas.edu).}
} 
\maketitle
\begin{abstract}
Characterizing user to access point (AP) association strategies in heterogeneous cellular networks (HetNets) is critical for their performance analysis, as it directly influences the load across the network.  In this letter, we introduce and analyze a  class of association strategies, which we term \textit{stationary association}, and the resulting association cells. For random HetNets, where APs are  distributed  according to a stationary point process, the area of the resulting association cells are shown to be the marks of the corresponding point process.  Addressing the need of quantifying the load experienced by a typical user, a ``Feller-paradox" like relationship is established between the area of the  association cell containing origin and that of a typical association cell.   For the specific case of Poisson point process and max power/$\SINR$ association, the mean association area of  each  tier is derived and  shown to  increase with  channel gain variance and decrease in the path loss exponents  of the corresponding tier.
\end{abstract}

\section{Introduction} 
Densification of wireless cellular infrastructure through deployment of low power APs is a promising approach to meet increasing wireless traffic demands. This complementary infrastructure  consists of various classes of APs differing in transmit powers, radio access technologies,  backhaul capacities, and deployment scenarios.  This increasing heterogeneity  and density in wireless networks  has provided an impetus  to develop new models  for their analysis and design. 
This  paper is primarily aimed to  analyze such random heterogeneous networks (HetNets), where the AP locations are modeled by a  stationary point process.

Using Poisson point processes (PPP) for modeling the irregular AP locations has been shown to be a tractable and accurate approach for characterizing signal-to-interference ratio ($\SIR$) distribution \cite{AndGanBac11,BlaKarKee12,dhiganbacand12}. 
 Although $\SIR$ is one of the key performance metrics for user performance, managing \textit{load}  or the number of users sharing the available resources per AP plays an important role in realizing the capacity gains in HetNets \cite{AndLoadCommag13}.
The load at an AP is  dictated by the user to AP  association strategy adopted in the network.  For example, users associating to their nearest AP leads to association cells conforming  to a Voronoi tessellation with AP locations as the cell centers and identical load distribution across the APs. However, in HetNets, it is desirable to  incorporate the differing base station (BS) capabilities among the classes/tiers of BSs and the propagation environments  in the association strategy. Being able to  characterize the resulting complex association cells  is one of the goals of this paper.

In this paper, we introduce \textit{stationary} association strategies, which lead to the formation of stationary association cells. Such strategies form a wide class and encompass all association patterns that are invariant by translation, including the earlier studied max $\SINR$ association \cite{MadHCN12}.  Leveraging the theory of stationary partitions introduced in \cite{Last2006},  we establish a ``Feller-paradox" like relation between the association area of the AP containing the origin to that of a typical AP in a HetNet setting, wherein the former is an \textit{area-biased} version of the latter. Such a relation has important practical implications in analyzing the load experienced by a typical user which is served, as we shall see, by an \textit{atypical} AP.  The developed theoretical framework also provides rigorous proofs for the arguments used in \cite{SinDhiAnd13,SinAnd13} for load characterization. Further, using the PPP assumption and max-power association, it is shown that the association area of a typical AP of a tier increases with the channel gain variance and decrease in the path loss exponent for the corresponding tier.

\section{Stationary association}\label{sec:StatAssoc}
The locations of the base stations are   $\{ T_n\}$  and seen as the atoms of a stationary point process (PP)  $\PPP{}$  defined on a measurable space ($\Omega,\set{A},\pr$) and  having intensity $\dnsty{}$. The analysis in this paper is for $\real{2}$ due to the practical implications, but it also extends to $\real{d}$. Further $\PPP{}$ is assumed to be $\theta_t$ compatible, where $\theta_t$ is a measurable flow on $\Omega$, so that 
\begin{equation*}
\PPP{}(\omega, B+x)=\PPP{}(\theta_x\omega, B)\,\,, \omega \in \Omega, x \in \R^2, B \in \mathcal{B},
\end{equation*}
where $\mathcal{B}$ denotes the Borel $\sigma$-field on $\R^2$.
The operation $\theta_x\om$ can also be thought of as $\PPP{}(\om)$ shifted by $-x$. 
Let $\ccg(x) \in \PPP{}\,\, \forall x \in \real{2}$ denote the base station to which a user lying at $x$ associates. The mapping $\ccg: \Omega\times\real{2}\to \real{2}$  is referred to as an \textit{association strategy}.
 \begin{definition}\label{def:cc}
\textit{Stationary Association}: An association strategy $\ccg(x)$ is stationary if the association is translation invariant, i.e., 
\begin{equation}\label{eq:ccdef}
\ccg(x) = \ccg(0)\circ\theta_x \,\, \forall x \in \real{2},
\end{equation}
where $\circ$ denotes the composition operator.
\end{definition}
Further a collection of fields $\{\mk_n(y)\} \in \real{+}\cup \infty$ $\forall\, y \in \real{2}$ is assumed associated with the atoms  $\{T_n\}$ of $\PPP{}$ such that
\begin{equation}
\begin{aligned}\label{eq:markdef}
\mk_0(y)\circ \theta_{T_n} & =  \mk_n(y+T_n) \text{ and }  \\
\mk_n(y)& = \infty \text{ if } y=T_n ,\\
\end{aligned}
\end{equation}
and therefore for a given $y$, the associated field $\mk_n(y)$ forms a sequence of marks for $\PPP{}$. Define a mapping $\cc(y) \triangleq \arg \sup \mk_n(y)$, where the $\sup$ is assumed to be well defined. Thus, by  definition \ref{def:cc}  and  (\ref{eq:markdef}), $\cc(y)$ is a stationary association. 
\begin{definition} \textit{Association Cell}:
The association cell $\assocr(T_n)$ of an AP at $T_n$ is defined as 
\begin{equation*}
\assocr(T_n) = \{y \in \real{2} : \cc(y) = T_n\}
\end{equation*}
and $|\assocr(T_n)|$ is the corresponding association area.
\end{definition}

\begin{lem}\label{lem:cellmark}
Under stationary association, the area of association cells is a sequence of marks.
\end{lem}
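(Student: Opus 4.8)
The plan is to verify that the cell-area sequence $\{|\assocr(T_n)|\}$ satisfies the same flow-covariance identity \eqref{eq:markdef} that characterizes a sequence of marks of $\PPP{}$, namely the $y$-independent specialization $|\assocr(T_0)|\circ\theta_{T_n}=|\assocr(T_n)|$. I would proceed in two stages: first establish a \emph{geometric} equivariance of the association cells under the flow, and then collapse it to an identity of areas using the translation invariance of Lebesgue measure.

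For the first stage I would start from the already-established fact that $\cc$ is a stationary association, i.e. $\cc(y)=\cc(0)\circ\theta_y$. Read with the translation made explicit, this says that a user at $y$ in $\omega$ and a user at the origin in the recentred environment $\theta_y\omega$ pick corresponding atoms; more generally, for a deterministic shift $x$ one has $\cc(y)(\omega)=\cc(y-x)(\theta_x\omega)+x$. Since this identity holds pointwise for every $\omega$ and every $x$, I can substitute the measurable random shift $x=T_n(\omega)$, under which $\theta_{T_n}$ carries the atom $T_n$ to the origin (labelled $T_0$ in $\theta_{T_n}\omega$). Then $y\in\assocr(T_n)(\omega)$, i.e. $\cc(y)(\omega)=T_n(\omega)$, is equivalent to $\cc(y-T_n)(\theta_{T_n}\omega)=0$, i.e. $y-T_n\in\assocr(T_0)(\theta_{T_n}\omega)$. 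This yields the equivariance
\begin{equation*}
\assocr(T_n)(\omega)=\assocr(T_0)(\theta_{T_n}\omega)+T_n(\omega).
\end{equation*}

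For the second stage I would simply apply two-dimensional Lebesgue measure to both sides. Because $|\cdot|$ is invariant under the translation by $T_n$, the shift disappears and $|\assocr(T_n)(\omega)|=|\assocr(T_0)(\theta_{T_n}\omega)|=\bigl(|\assocr(T_0)|\circ\theta_{T_n}\bigr)(\omega)$, which is exactly \eqref{eq:markdef} for a field constant in $y$. Hence $\{|\assocr(T_n)|\}$ is a sequence of marks of $\PPP{}$, proving the lemma.

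The essential work sits in the first stage, not in the (near-trivial) translation-invariance step. Specifically, I would need to (i) justify passing from the deterministic-shift stationarity identity to the random shift $T_n$, carefully tracking the re-enumeration of atoms so that the atom brought to the origin by $\theta_{T_n}$ is consistently the one labelled $T_0$, and (ii) verify the measurability that makes the claim meaningful, namely that each $\assocr(T_n)$ is a Borel subset of $\R^2$ and that $\omega\mapsto|\assocr(T_n)(\omega)|$ is a genuine random variable. The assumed well-definedness of $\cc=\arg\sup_n \mk_n$, together with the mark-covariance of the underlying fields $\{\mk_n\}$ in \eqref{eq:markdef}, is precisely what underwrites both the equivariance and this measurability.
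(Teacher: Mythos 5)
Your proof is correct, and it reaches the paper's key identity $\assocr(T_n)=\left(\assocr(T_0)\circ\theta_{T_n}\right)+T_n$ by a somewhat different route. The paper never invokes the stationarity identity (\ref{eq:ccdef}) of the association map; instead it unwinds the cell as the domination region of the fields, writing $\assocr(T_0)\circ\theta_{T_n}=\{y : \mk_0(y)\circ\theta_{T_n} > \mk_m(y)\circ\theta_{T_n}\ \forall m\neq 0\}$, applies the mark-covariance (\ref{eq:markdef}) of the fields $\mk_n$ termwise to identify this set with $\assocr(T_n)-T_n$, and then, exactly like you, finishes with translation invariance of Lebesgue measure. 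Your first stage instead derives the same set equivariance from the abstract covariance property (\ref{eq:ccdef}) of $\cc$ alone, via the recentred identity $\cc(y)(\omega)=\cc(y-x)(\theta_x\omega)+x$ specialized at $x=T_n(\omega)$. What this buys you: your argument uses exactly the hypothesis stated in the lemma (``stationary association'') and hence applies to any stationary association, not only those of the $\arg\sup$-of-fields form through which the paper constructs $\cc$; it also makes explicit the recentering that (\ref{eq:ccdef}) leaves implicit. What the paper's route buys: it is shorter and works directly with the concrete objects defining the cells, at the cost of the same atom-relabelling bookkeeping you flag (the paper's $m$ versus $m'$) and of establishing the lemma only for field-induced associations. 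Both proofs share the same two-step skeleton (set equivariance, then invariance of area under translation), and both leave the measurability of $\omega\mapsto|\assocr(T_n)(\omega)|$ implicit.
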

\begin{proof} It needs to be shown that $|\assocr(T_n)|= |\assocr(T_0)\circ\theta_{T_n}|$. 
\begin{align*}
&\assocr(T_0) \circ \theta_{T_n}=\{y : \mk_0(y) \circ \theta_{T_n} >  \mk_m(y)\circ\theta_{T_n} \,\, \forall m \neq 0\big\}\\
&\overset{(a)}{=} \{y : \mk_n(y+T_n) >   \mk_m(y+T_n) \,\, \forall m' \neq n\}\\
&=\assocr_n - T_n,
\end{align*} 
where (a) follows from  (\ref{eq:markdef}). Since area  is translation invariant the result follows.
\end{proof}
Below are listed certain   strategies that qualify as a stationary association under certain conditions.
\begin{enumerate}[I]
\item Max power association: User connects to the base station from which it  receives the maximum power.  Letting $P(n)$ denote the transmit power of AP at $T_n$, $H_n(y)$ denote the  channel power gain and  a power law path loss with  path loss exponent  $\ple{n}$, then the serving AP is 
\begin{equation}\label{eq:maxpowerassoc}
\cc(y) = \arg\sup S_n(y)=\arg\sup  P(n)H_n(y)\|T_n-y\|^{-\ple{n}}.
\end{equation}
The field $S_n(y)$ satisfies (\ref{eq:markdef}) if $\{\ple{n}\} $ and $H_n(y)$  are a sequence of marks. Further, if $\arg \sup S_n(y)$ is well defined, then max power association is stationary\footnote{If the sum $\sum_{n\geq 1} S_n(y)$ is finite almost surely (a.s.), then there exists no accumulation  at  $\sup S_n(y)$ a.s. and hence the $\arg \sup S_n(y)$ is well defined a.s.}.
\item Max $\SIR$ association: A user connects to the base station providing the highest $\SIR$. The corresponding field is 
\begin{equation*}
S_n(y)=\frac{P(n)H_n(y)\|T_n-y\|^{-\ple{n}}}{\sum_{m\neq n}P(m)H_m(y)\|T_m-y\|^{-\ple{m}}}.
\end{equation*}
It can be  seen  that max $\SIR$ association is equivalent to max power association in I. Note that the association cells formed in this case are different than the $\SINR$ coverage cells defined in \cite{BB2001}.

\item Nearest base station association: This results in the classical case of Voronoi cells as association cells, which are stationary.
\end{enumerate}

In this paper, the probability and expectation under the Palm probability  are denoted  by $\ppr$ and $\pexpect{}$ respectively.
\begin{prop}\label{prop:inv}
For all measurable functions  $f : \Omega\to \real{+}$
\begin{equation*}
\expect{f} =  \dnsty{}\pexpect{\int_{\assocr(T_0)}f\circ\theta_{u}\mathrm{d}u}.
\end{equation*}
\end{prop}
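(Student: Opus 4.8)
The displayed identity is the inverse formula of Palm calculus specialized to the stationary partition formed by the association cells, so the plan is to derive it from the refined Campbell (Campbell--Mecke) theorem together with the tiling property of the cells. In the flow framework this theorem states that for any measurable $g:\Omega\times\real{2}\to\real{+}$,
\begin{equation*}
\expect{\sum_{T_n\in\PPP{}} g(\theta_{T_n}\om,T_n)}=\dnsty{}\int_{\real{2}}\pexpect{g(\om,x)}\,\mathrm{d}x.
\end{equation*}
The entire proof then reduces to choosing $g$ so that the left-hand side collapses to $\expect{f}$ while the right-hand side reproduces the claimed integral over $\assocr(T_0)$.

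Concretely, I would take
\begin{equation*}
g(\om,x)=f(\theta_{-x}\om)\,\indic\{-x\in\assocr(T_0)(\om)\}.
\end{equation*}
Evaluating the summand and using the covariance relation $\assocr(T_0)\circ\theta_{T_n}=\assocr(T_n)-T_n$ from the proof of Lemma \ref{lem:cellmark}, together with $\theta_{-T_n}\theta_{T_n}=\mathrm{id}$, gives
\begin{equation*}
g(\theta_{T_n}\om,T_n)=f(\om)\,\indic\{0\in\assocr(T_n)(\om)\},
\end{equation*}
since $-T_n\in\assocr(T_0)(\theta_{T_n}\om)$ is equivalent to $0\in\assocr(T_n)(\om)$. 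Because the cells partition $\real{2}$, exactly one atom serves the origin, so $\sum_n\indic\{0\in\assocr(T_n)(\om)\}=1$ almost surely and the left-hand side equals $\expect{f}$. For the right-hand side, the change of variable $u=-x$ followed by Fubini turns $\dnsty{}\int_{\real{2}}\pexpect{f(\theta_{-x}\om)\,\indic\{-x\in\assocr(T_0)\}}\,\mathrm{d}x$ into $\dnsty{}\pexpect{\int_{\assocr(T_0)}f\circ\theta_u\,\mathrm{d}u}$, which is exactly the asserted expression. Equating the two sides of Campbell--Mecke then yields the claim.

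The main obstacle is not the algebra but justifying the two structural facts the argument leans on. First, one must ensure that the cells genuinely tile $\real{2}$: this requires the $\arg\sup$ defining $\cc$ to be well defined for almost every location (the condition flagged in the footnote after (\ref{eq:maxpowerassoc})), so that almost every point is served by a unique atom and the partition identity $\sum_n\indic\{0\in\assocr(T_n)(\om)\}=1$ holds a.s.\ without double-counting on cell boundaries. Second, the covariance $\assocr(T_0)\circ\theta_{T_n}=\assocr(T_n)-T_n$ must be invoked at the level of the indicator to pass between the stationary and Palm frames, which is precisely the content of Lemma \ref{lem:cellmark}. The remaining measurability and integrability checks needed to apply Campbell--Mecke and Fubini are routine once $f\ge 0$ and the cells are almost surely of finite area.
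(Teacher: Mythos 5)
Your proof is correct and is essentially the paper's argument: the paper simply cites Theorem 4.1 of \cite{Last2006}, whose content is exactly this application of Mecke's (Campbell--Mecke) formula to a function encoding ``the origin is served by the atom at $x$'', which is what your choice of $g$ does after the shift. You have merely unpacked the citation into a self-contained computation, including the cell-covariance step from Lemma \ref{lem:cellmark} and the a.s.\ partition property that the paper leaves implicit.
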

\begin{proof} 
Stationary association $\cc$ satisfying  (\ref{eq:ccdef}) can be seen as the stationary partition introduced in  \cite{Last2006}.
The  proof follows using Theorem 4.1 of \cite{Last2006}, which applies  Mecke's formula to the function $h(\om,x)= f(\om)\indic(x\in \PPP{}(\om), \cc(0)=x)$.
\end{proof}
\begin{figure*}
  \centering
\subfloat[No channel variance,  $\sigma_1=\sigma_2 = 0$]{\label{fig:v0}\includegraphics[width=0.7\columnwidth]{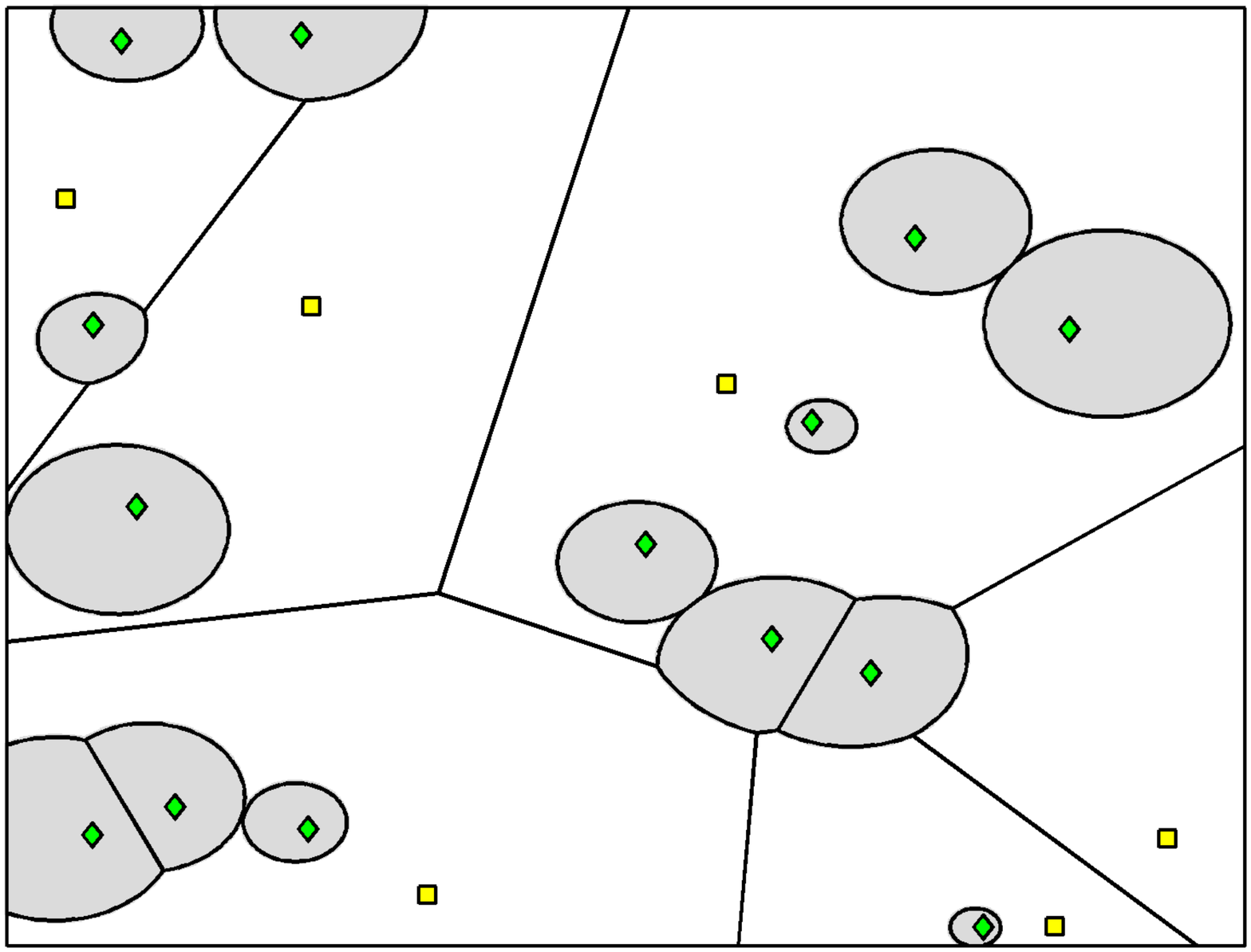}}
\subfloat[Low channel variance, $\sigma_1=1$, $\sigma_2 =1$]{\label{fig:v1}\includegraphics[width=0.7\columnwidth]{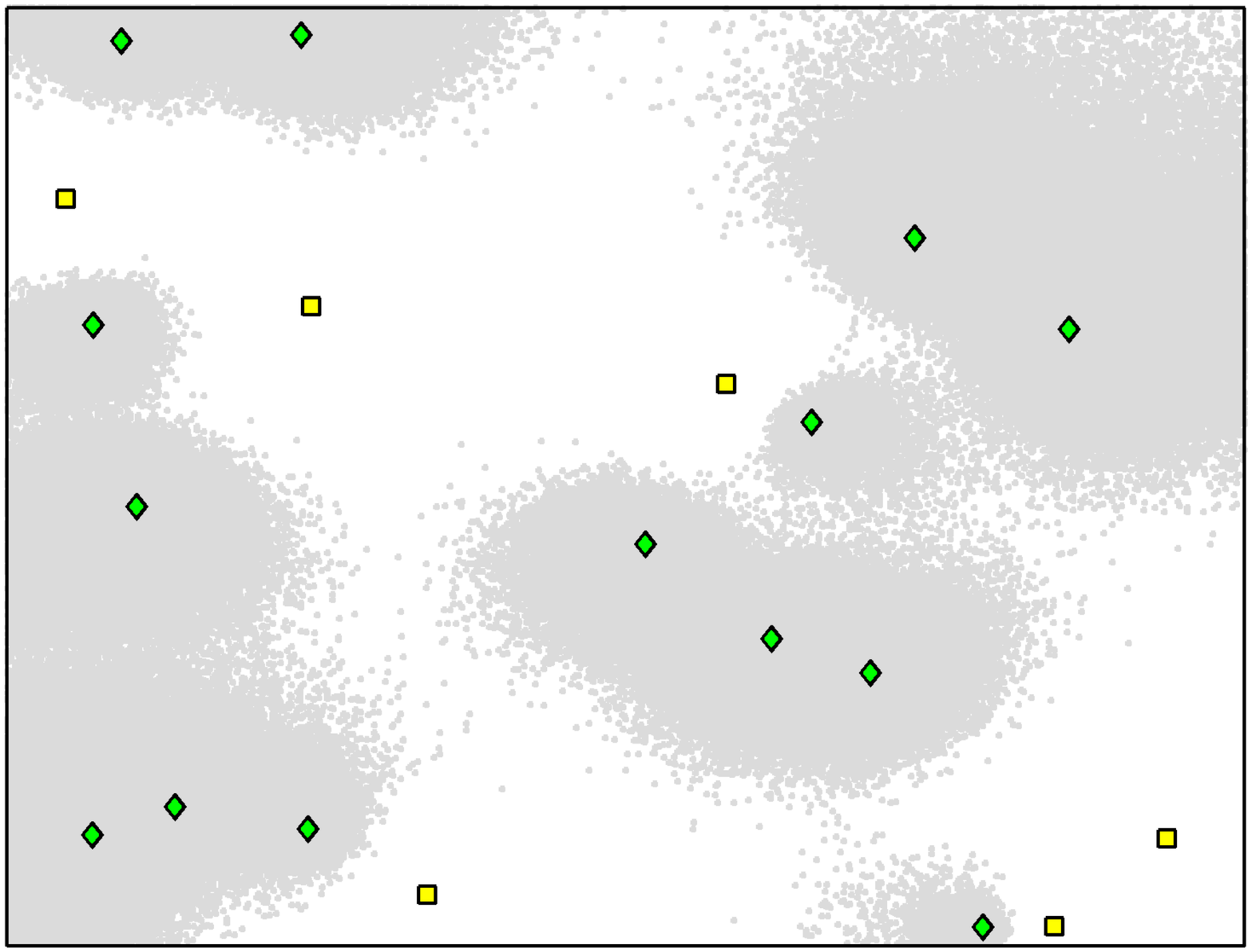}}
\subfloat[High channel variance, $\sigma_1 =1$, $\sigma_2 =2$]{\label{fig:v5}\includegraphics[width=0.7\columnwidth]{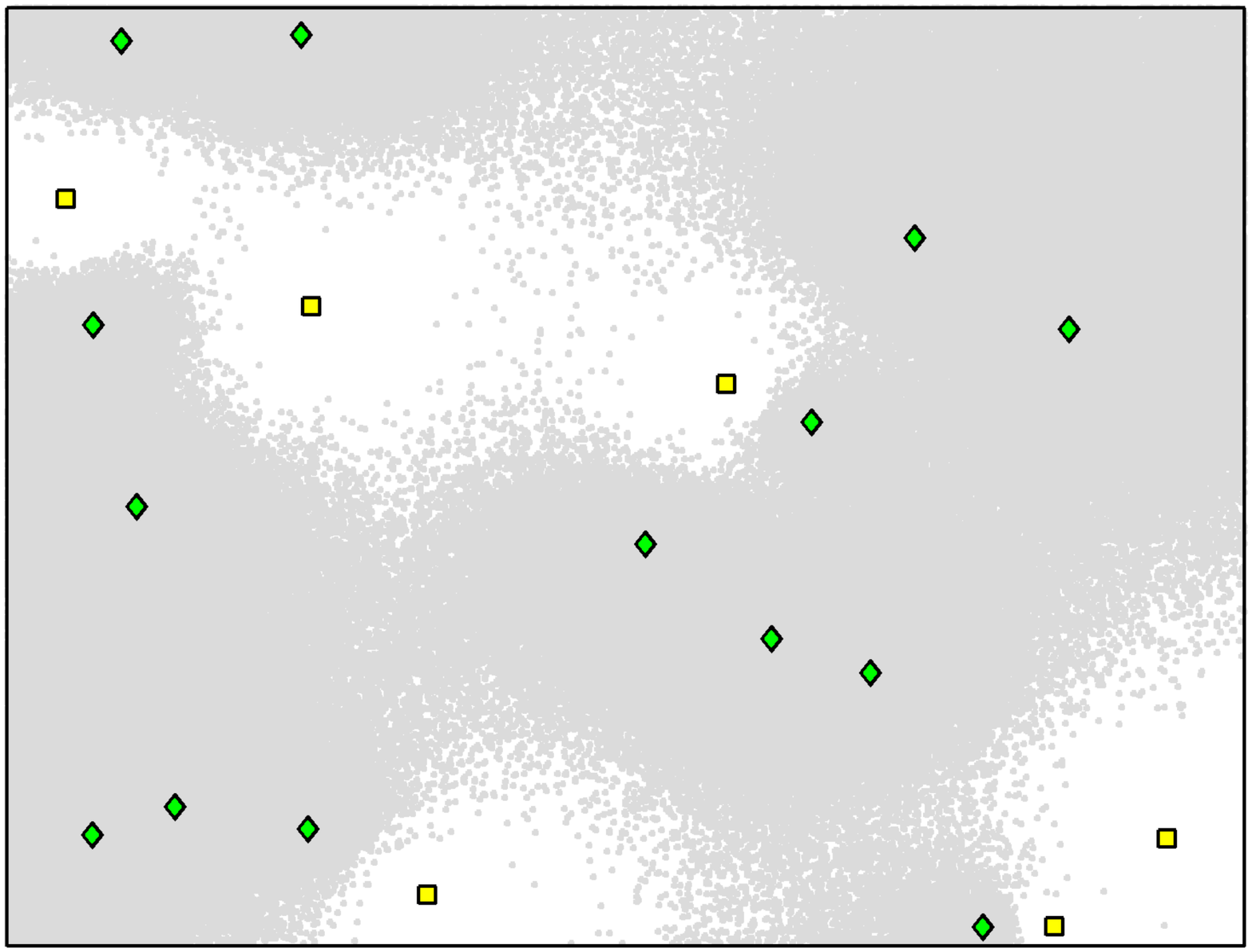}}
\caption{The shaded region is served by the APs of tier-2  (diamonds), while the rest of the area is served by tier-1 APs (squares). }
 \label{fig:assoc}
\end{figure*}

\section{Association in $K$-tier Networks}
In a $K$-tier HetNet, the APs are assumed to belong to   $K$ distinct \textit{classes}. Assuming independent deployment  of APs of different tiers,  we define i.i.d.  marks  mapping the AP index to tier index as $\tmap{T_n}\in \{1\ldots K\}$. The mapping distribution for a typical BS is 
\begin{equation*}
p_k \triangleq \ppr(\tmap{T_0}=k). 
\end{equation*}
 The location of the APs of $\uth{k}$ tier  is denoted  by the  point process $\PPP{k}$, where 
\begin{equation*}
\PPP{k} = \sum_{T_i \in \PPP{}} \delta_{T_i}\indic(\tmap{T_i}=k).
\end{equation*}
The following proposition builds up on Prop. \ref{prop:inv} to relate the probability of origin being contained in the association cell of tier $i$ to the association area of a typical cell of the corresponding tier.
\begin{prop} \label{prop:passoc}
 The probability that the origin is contained in the association cell  of an atom of $\PPP{i}$ is
\begin{equation*}
\passoc_i  \triangleq \pr(\tmap{\cc(0)}= i) = \dnsty{}p_i\pkexpect{|\assocr(T_0)|}{i}
\end{equation*}
\end{prop}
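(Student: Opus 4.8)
The plan is to apply Proposition~\ref{prop:inv} to the indicator function
\begin{equation*}
f = \indic(\tmap{\cc(0)} = i),
\end{equation*}
whose stationary expectation is exactly $\expect{f} = \pr(\tmap{\cc(0)} = i) = \passoc_i$. Substituting this $f$ into Proposition~\ref{prop:inv} gives
\begin{equation*}
\passoc_i = \dnsty{}\pexpect{\int_{\assocr(T_0)} f\circ\theta_u\,\mathrm{d}u},
\end{equation*}
so the whole argument reduces to identifying the integrand $f\circ\theta_u$ for points $u$ in the association cell of the origin atom.

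The key step is to evaluate $f\circ\theta_u$ under the Palm measure, where $T_0 = 0$, for $u \in \assocr(T_0)$. By Definition~\ref{def:cc} the association is translation invariant, $\cc(0)\circ\theta_u = \cc(u)$, and since the tier labels $\tmap{\cdot}$ are marks carried along by the flow $\theta$, $f\circ\theta_u$ equals the indicator that the AP serving $u$ (in the original configuration) belongs to tier $i$. But by the very definition of the association cell, every $u\in\assocr(T_0)$ satisfies $\cc(u) = T_0$, so the serving AP is the origin atom itself and its tier is $\tmap{T_0}$. Hence $f\circ\theta_u = \indic(\tmap{T_0} = i)$ on the entire cell, independent of $u$.

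With the integrand now constant over $\assocr(T_0)$, I would pull it out of the integral to obtain
\begin{equation*}
\passoc_i = \dnsty{}\pexpect{\indic(\tmap{T_0} = i)\,|\assocr(T_0)|},
\end{equation*}
and then split the Palm measure according to the tier of the origin atom, using $\pexpect{\indic(\tmap{T_0}=i)\,(\cdot)} = p_i\,\pkexpect{(\cdot)}{i}$ with $p_i = \ppr(\tmap{T_0}=i)$, which yields the claim. I expect the main obstacle to be the rigorous justification of the identity $f\circ\theta_u = \indic(\tmap{T_0}=i)$ on the cell: this requires simultaneously using the translation invariance of the association strategy, the covariance of the tier marks under $\theta_u$, and the defining property of $\assocr(T_0)$, all while tracking carefully in which configuration each object is evaluated. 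The remaining steps---extracting the constant indicator and conditioning on the tier---are routine once this identity is in place.
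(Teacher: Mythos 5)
Your proposal is correct and follows essentially the same route as the paper's proof: apply Proposition~\ref{prop:inv} to $f=\indic(\tmap{\cc(0)}=i)$, use translation invariance plus the defining property of the cell to see that the integrand is constant equal to $\indic(\tmap{T_0}=i)$ on $\assocr(T_0)$ under the Palm measure, and then split the Palm expectation by the tier of the typical atom (the paper phrases this last step as Bayes' theorem, writing $\tmap{\cc(0)}$ where you write $\tmap{T_0}$, which coincide under Palm). No gaps; the argument matches the paper step for step.
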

\begin{proof} 
Using Prop. \ref{prop:inv} with $f = \indic(\tmap{\cc(0)}= i)$, we obtain
\begin{align*}
\expect{\indic(\tmap{\cc(0)}= i)}  &=  \dnsty{} \pexpect{\int_{\assocr(T_0)} \indic(\tmap{\cc(0)}\circ \theta_u= i)\mathrm{d}u}\\
\pr(\tmap{\cc(0)}= i) & \overset{(a)}{=} \dnsty{} \pexpect{\indic(\tmap{\cc(0)}= i)|\assocr(T_0)|}\\
\passoc_i & \overset{(b)}{=} \dnsty{}\pkexpect{|\assocr(T_0)|}{i}\ppr(\tmap{\cc(0)}=i),
\end{align*}
where (a) holds, as under palm $\tmap{\cc(0)}\circ \theta_u = \tmap{\cc(u)}= \tmap{\cc(0)}$  for $ u \in \assocr(T_0)$ and (b) follows from Bayes theorem. 
\end{proof}
For the case where users in the network form a homogeneous PPP, $\passoc_i$ denotes the probability of a typical user associating with the $\uth{i}$ tier.
The following proposition gives a conditional form of Prop. \ref{prop:inv} in a  $K$-tier setting.
\begin{prop}\label{prop:invktier} For all measurable functions $g: \Omega \to \real{+}$
\begin{equation*}
\expect{g|\tmap{\cc(0)}=i}=  \frac{\pkexpect{\int\limits_{\assocr(T_0)}g\circ\theta_{u}\mathrm{d}u}{i} }{\pkexpect{|\assocr_0|}{i}}.
\end{equation*}
\end{prop}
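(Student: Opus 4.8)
The plan is to write the conditional expectation as a ratio via Bayes' theorem and then reduce both the numerator and the denominator to tier-$i$ Palm quantities, mirroring the manipulations already used in the proof of Prop.~\ref{prop:passoc}. First I would write
\begin{equation*}
\expect{g \mid \tmap{\cc(0)}=i} = \frac{\expect{g\,\indic(\tmap{\cc(0)}=i)}}{\pr(\tmap{\cc(0)}=i)} = \frac{\expect{g\,\indic(\tmap{\cc(0)}=i)}}{\passoc_i},
\end{equation*}
so that the denominator is immediately supplied by Prop.~\ref{prop:passoc}.

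Next I would attack the numerator by applying Prop.~\ref{prop:inv} with the test function $f = g\,\indic(\tmap{\cc(0)}=i)$, giving
\begin{equation*}
\expect{g\,\indic(\tmap{\cc(0)}=i)} = \dnsty{}\,\pexpect{\int_{\assocr(T_0)} (g\,\indic(\tmap{\cc(0)}=i))\circ\theta_u\,\mathrm{d}u}.
\end{equation*}
The composition factorizes as $(g\circ\theta_u)\,(\indic(\tmap{\cc(0)}=i)\circ\theta_u)$, and here I would invoke the same identity used for step (a) in Prop.~\ref{prop:passoc}: by stationarity $\cc(0)\circ\theta_u=\cc(u)$, and for $u\in\assocr(T_0)$ the point $u$ is served by $T_0$ under the Palm probability, so $\tmap{\cc(0)}\circ\theta_u = \tmap{\cc(u)} = \tmap{T_0}$. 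Hence the indicator collapses to $\indic(\tmap{T_0}=i)$, which is constant over the cell and can be pulled outside the integral.

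With the indicator extracted, I would convert the resulting Palm expectation into a tier-$i$ Palm expectation using $\pexpect{X\,\indic(\tmap{T_0}=i)} = p_i\,\pkexpect{X}{i}$, which is just the definition of $\pkexpect{\cdot}{i}$ together with $p_i = \ppr(\tmap{T_0}=i)$. This yields the numerator as $\dnsty{}\,p_i\,\pkexpect{\int_{\assocr(T_0)} g\circ\theta_u\,\mathrm{d}u}{i}$. Substituting $\passoc_i = \dnsty{}\,p_i\,\pkexpect{|\assocr(T_0)|}{i}$ from Prop.~\ref{prop:passoc} into the denominator, the common factor $\dnsty{}\,p_i$ cancels and the claimed identity follows.

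The main obstacle is the indicator-simplification step: one must justify carefully that $\tmap{\cc(0)}\circ\theta_u$ reduces to the constant $\tmap{T_0}$ for every $u$ in the cell, which rests on the stationarity of $\cc$ (Definition~\ref{def:cc}) and on $\tmap{\cdot}$ being a shift-compatible mark. Everything else is a routine ratio computation assembled from the two preceding propositions.
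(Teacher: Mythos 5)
Your proof is correct and follows essentially the same route as the paper's: both apply Prop.~\ref{prop:inv} with the test function $f = g\,\indic(\tmap{\cc(0)}=i)$, collapse the shifted indicator to $\indic(\tmap{T_0}=i)$ over the cell, convert to the tier-$i$ Palm expectation via $p_i$, and finish by cancelling against $\passoc_i$ from Prop.~\ref{prop:passoc}. The only difference is presentational: you spell out the indicator-simplification step that the paper leaves implicit, which is a welcome clarification rather than a deviation.
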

\begin{proof} Using $f= g\indic(\tmap{\cc(0)}=i)$ in Prop.  \ref{prop:inv},  the LHS is
\begin{equation*}
\expect{g\indic(\tmap{\cc(0)}=i)}= \expect{g|\tmap{\cc(0)}=i}\passoc_i,
\end{equation*}
and the RHS is 
\small
\begin{equation*}
 \dnsty{} \pexpect{\indic(\tmap{\cc(0)}=i)\int_{\assocr(T_0)}g\circ \theta_u  \mathrm{d}u} =  \dnsty{}p_i \pkexpect{\int_{\assocr(T_0)}g\circ \theta_u  \mathrm{d}u}{i}
\end{equation*}\normalsize
Using   Prop. \ref{prop:passoc} in the above, gives the result.
\end{proof}
Using the above proposition, the density of association area  of the AP of tier $i$  containing  origin (assuming it exists)   can be given in terms of that  of the area of a typical association cell of  the corresponding tier.
\begin{cor}\label{cor:lawtypical}
The density of the area of the association cell of tier $i$ containing origin is given by
\begin{equation*}
f_{|\assocr(\cc(0))|}^i(c) = \frac{c f^{o,i}_{|\assocr(T_0)|}(c)}{\pkexpect{|\assocr(T_0)|}{i}},
\end{equation*}
where $ f^{o,i}_{|\assocr(T_0)|} $ is the density of the  area of  a typical association cell of tier $i$.
\end{cor}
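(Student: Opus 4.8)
The plan is to obtain the claimed area-biased density by specializing Proposition \ref{prop:invktier} to a generic test function of the area of the cell containing the origin, and then reading off the density by a standard ``arbitrary test function'' argument.

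First I would fix an arbitrary bounded measurable $\phi : \real{+} \to \real{+}$ and take $g = \phi(|\assocr(\cc(0))|)$ in Proposition \ref{prop:invktier}. The left-hand side then becomes $\expect{\phi(|\assocr(\cc(0))|) \mid \tmap{\cc(0)} = i} = \int \phi(c)\, f^i_{|\assocr(\cc(0))|}(c)\, \mathrm{d}c$, i.e. the expectation of $\phi$ against the density we wish to characterize. The entire task thus reduces to evaluating the right-hand side of Proposition \ref{prop:invktier} for this choice of $g$ and matching it, in $\phi$, against the typical-cell density.

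The crux is simplifying the numerator $\pkexpect{\int_{\assocr(T_0)} g\circ\theta_u\,\mathrm{d}u}{i}$. Using stationarity of the association (Definition \ref{def:cc}) one has $\cc(0)\circ\theta_u = \cc(u)$, and since cell area is translation invariant (Lemma \ref{lem:cellmark}) this gives $g\circ\theta_u = \phi(|\assocr(\cc(u))|)$. Under the Palm measure there is an atom $T_0$ at the origin, and for every $u \in \assocr(T_0)$ the user at $u$ associates to $T_0$, so $\cc(u) = T_0$ and hence $|\assocr(\cc(u))| = |\assocr(T_0)|$. This is precisely the argument used in step (a) of the proof of Proposition \ref{prop:passoc}, now applied to the area functional rather than to the tier map $\tmap{\cdot}$. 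Consequently the integrand is constant over the domain of integration and equals $\phi(|\assocr(T_0)|)$, so the integral collapses to $\phi(|\assocr(T_0)|)\,|\assocr(T_0)|$. I expect this to be the main obstacle, since it is where the translation invariance of the area together with the constancy of $\cc(u)$ on a single cell must be invoked carefully to turn the $\mathrm{d}u$ integration into the very area factor that produces the bias.

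After this reduction the right-hand side becomes $\pkexpect{\phi(|\assocr(T_0)|)\,|\assocr(T_0)|}{i} / \pkexpect{|\assocr(T_0)|}{i}$, which, written against the typical-cell density $f^{o,i}_{|\assocr(T_0)|}$, equals $\int \phi(c)\, c\, f^{o,i}_{|\assocr(T_0)|}(c)\,\mathrm{d}c$ divided by $\pkexpect{|\assocr(T_0)|}{i}$. Equating this with the left-hand side and using that $\phi$ is an arbitrary nonnegative test function identifies the two densities pointwise, yielding $f^i_{|\assocr(\cc(0))|}(c) = c\, f^{o,i}_{|\assocr(T_0)|}(c) / \pkexpect{|\assocr(T_0)|}{i}$, which is the desired area-biasing (Feller-paradox) relation.
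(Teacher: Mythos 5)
Your proposal is correct and follows essentially the same route as the paper: both specialize Proposition \ref{prop:invktier} to a function of the cell area, use the fact that under the Palm measure $\cc(u)=T_0$ (equivalently $|\assocr(\cc(0)\circ\theta_u)|=|\assocr(\cc(0))|$) for all $u\in\assocr(T_0)$ to collapse the $\mathrm{d}u$ integral into the biasing factor $|\assocr(T_0)|$, and then identify the densities. The only difference is cosmetic: the paper takes $g=\indic(c\leq|\assocr(\cc(0))|\leq c+\mathrm{d}c)$ and reads off the density informally via infinitesimals, whereas you use an arbitrary bounded test function $\phi$ and match both sides in $\phi$, which is a slightly cleaner rigorous rendering of the same computation.
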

\begin{proof} Using $g = \indic(v\leq|\assocr(\cc(0))|\leq v+\mathrm{d}v)$  in Prop. \ref{prop:invktier}  we get
\begin{multline}
\pr(c\leq|\assocr(\cc(0))|\leq c+\mathrm{d}c|\tmap{\cc(0)}=i) \\= \frac{\pkexpect{\int_{\assocr(T_0)}\indic(c\leq|\assocr(\cc(0)\circ\theta_u)|\leq c+\mathrm{d}c)\mathrm{d}u}{i}}{\pkexpect{|\assocr(\cc(0))|}{i}}\nonumber
\end{multline}
\begin{align*}
f^{o,i}_{|\assocr(\cc(0))|}(c)\overset{(a)}{=} \frac{\pkexpect{\indic(c\leq|\assocr(\cc(0))|\leq c+\mathrm{d}c)|\assocr(\cc(0))|}{i}}{\pkexpect{|\assocr(\cc(0))|}{i}},
\end{align*}
where (a) follows from the fact that under the Palm distribution $|\assocr(\cc(0)\circ\theta_u)|= | \assocr(\cc(0))|$ for $u \in \assocr(\cc(0))$. The final result is obtained  using Bayes theorem and the fact that under the Palm distribution  $\cc(0) = T_0$.
\end{proof}
As a consequence of the above corollary it can be stated that the area of the association cell containing a typical user is larger than that of  a typical cell, and the following holds
\begin{equation*}\label{eq:meanrelation}
\expect{|\assocr(\cc(0))|^d|\tmap{\cc(0))=i}}  =  \frac{\pkexpect{|\assocr(\cc(0))|^{d+1}}{i}}{\pkexpect{|\assocr(\cc(0))|}{i}}\,\, \forall d\in \real{}.
\end{equation*}

\section{Mean Association Area in PPP HetNets}
In this section, the mean association  area is derived for the case  where the base station process $\PPP{}$ is assumed to be a PPP. The general max power/$\SINR$ association given in (\ref{eq:maxpowerassoc}) is considered. It is further assumed that the APs of $\uth{k}$ tier have the same constant power and  path loss exponents, and have independent but identical channel gain distribution, i.e, $P(n) =\power{k}$, $\ple{n}=a_k$, and $H_n(y) \overset{(d)}{=} H_k$ $\forall$ $T_n \in \PPP{k}$. Due to the i.i.d. assumption on  $J$ marks,  by  the thinning theorem \cite{BacBook09}, each tier process $\PPP{k}$ is a PPP with density $\dnsty{k} \triangleq p_k\dnsty{}$ for $k=1\ldots K$. For illustration, Fig. \ref{fig:assoc} shows the association cells in a two tier setup with $\power{1}=53$ dBm, $\power{2}= 33$ dBm, $a_1=a_2=4$, and the channel gain is lognormal $H_k \sim \ln \mathcal{N}(0,\sigma_k)$. As seen from the plots, increasing the variance in the channel gain for the second tier increases the corresponding association areas (the shaded areas). This observation  is made rigorous by the following analysis.
\subsection{Analysis}
\begin{lem}
Under the max power association, the mean association area of  a typical base station of the $\uth{i}$ tier  is 
\small
\begin{equation*}
2\pi \int_{0}^{\infty} r\cexpect{\exp\left(-\pi\sum_{k=1}^K \tdnsty{k} r^{2a_i/a_k}\power{i}^{-2/a_k} H_i^{-2/a_k}\right)}{H_i} \mathrm{d}r,
\end{equation*}\normalsize
where $\tdnsty{k} = \dnsty{k}\power{k}^{2/a_k}\expect{H_k^{2/a_k}}$ and $\expect{H_k^{2/a_k}} < \infty$.
\end{lem}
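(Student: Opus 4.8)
The plan is to compute $\pkexpect{|\assocr(T_0)|}{i}$ directly under the Palm distribution of tier $i$, exploiting the fact that under $\pkpr{i}$ a tier-$i$ atom sits at the origin ($T_0 = 0$) with power $\power{i}$, path loss exponent $a_i$, and channel gain $H_i$. First I would turn the area into a containment probability: since $|\assocr(T_0)| = \int_{\real{2}} \indic(y \in \assocr(T_0)) \mathrm{d}y$, Fubini gives
\begin{equation*}
\pkexpect{|\assocr(T_0)|}{i} = \int_{\real{2}} \pkpr{i}(y \in \assocr(T_0)) \mathrm{d}y.
\end{equation*}
By the definition of max power association in (\ref{eq:maxpowerassoc}), $y \in \assocr(T_0)$ precisely when the origin delivers the strongest received power, i.e. $\power{i} H_i \|y\|^{-a_i} > \power{k} H_n(y) \|T_n - y\|^{-a_k}$ for every other atom $T_n$ (belonging to tier $k$).

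Next I would invoke Slivnyak's theorem: after removing the atom at the origin, the remaining atoms of all tiers are distributed exactly as the original independent PPPs $\PPP{k}$ with intensities $\dnsty{k}$. Conditioning on $H_i$ and writing $s = \power{i} H_i \|y\|^{-a_i}$ for the received power from the origin, the containment event becomes the requirement that no atom of any tier produces received power exceeding $s$. Because the tiers are independent, this probability factorizes over $k$, and for each tier it is a void probability of the marked PPP $\{(T, H)\}$ over the dominating region $\{(T, H) : \power{k} H \|T - y\|^{-a_k} > s\} = \{(T,H): \|T-y\| < (\power{k} H / s)^{1/a_k}\}$.

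The key computation is this per-tier void probability. Using the probability generating functional (equivalently the marking theorem) of a PPP, the expected number of dominating tier-$k$ atoms is $\dnsty{k} \cexpect{\pi (\power{k} H_k / s)^{2/a_k}}{H_k} = \pi \tdnsty{k} s^{-2/a_k}$, which is exactly where $\tdnsty{k} = \dnsty{k} \power{k}^{2/a_k} \expect{H_k^{2/a_k}}$ and the hypothesis $\expect{H_k^{2/a_k}} < \infty$ enter — the latter guarantees a finite intensity, so the void probability equals $\exp(-\pi \tdnsty{k} s^{-2/a_k})$. Substituting $s^{-2/a_k} = \power{i}^{-2/a_k} H_i^{-2/a_k} \|y\|^{2a_i/a_k}$, multiplying over $k$, and then averaging over the origin gain $H_i$, yields $\pkpr{i}(y \in \assocr(T_0)) = \cexpect{\exp(-\pi \sum_{k=1}^K \tdnsty{k} \|y\|^{2a_i/a_k} \power{i}^{-2/a_k} H_i^{-2/a_k})}{H_i}$; passing to polar coordinates ($\mathrm{d}y = r\,\mathrm{d}r\,\mathrm{d}\theta$) then gives the stated $2\pi \int_0^\infty r(\cdots)\mathrm{d}r$.

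The main obstacle is conceptual bookkeeping of the channel gains rather than any hard estimate: the origin's gain $H_i$ must be held fixed and its expectation taken only after the product over tiers is formed, since a single $H_i$ enters every factor through the shared threshold $s$, whereas each interferer's gain is averaged independently inside its own tier's void probability, producing the moment $\expect{H_k^{2/a_k}}$. Keeping these two roles of the channel gains distinct — and verifying that independence across tiers justifies the factorization of the void probabilities — is the delicate part; once that is set up, the void-probability evaluation and the polar change of variables are routine.
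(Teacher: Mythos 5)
Your proof is correct and follows the same skeleton as the paper's: Fubini to reduce the mean area to the containment probability $\pkpr{i}\left(y\in\assocr(T_0)\right)$, conditioning on the serving gain $H_i$, factorization across tiers by independence, an exponential void probability per tier producing the moment $\expect{H_k^{2/a_k}}$, and a final polar-coordinate integration. The one substantive difference is the tool used to evaluate the per-tier void probability: the paper invokes the i.i.d.\ displacement theorem, mapping each $\PPP{k}$ through $T_n \mapsto (T_n-u)(H(n,u)\power{k})^{-1/a_k}$ into a homogeneous PPP $\tPPP{k}$ of intensity $\tdnsty{k}$ and then taking the void probability of a deterministic ball, whereas you keep $\PPP{k}$ as a marked PPP and compute the void probability of the random dominating region $\{(T,h):\|T-y\|<(\power{k}h/s)^{1/a_k}\}$ via the marking/Campbell theorem. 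The two computations are equivalent --- both reduce to averaging the area $\pi(\power{k}H_k/s)^{2/a_k}$ over $H_k$, which is where the hypothesis $\expect{H_k^{2/a_k}}<\infty$ enters in each case --- but your route is slightly more elementary (no displacement theorem needed), while the paper's transformation makes the effective-density structure explicit, which is what lets it connect immediately to the propagation-invariance results of the prior max power/$\SINR$ literature it cites. Your insistence on holding $H_i$ fixed across all tier factors while averaging each interferer gain inside its own factor is exactly the step the paper performs at its step (a), and your explicit appeal to Slivnyak's theorem spells out a point the paper leaves implicit in its use of the Palm probability $\pkpr{i}$.
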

\begin{proof} The mean association area of a typical cell of the $\uth{i}$ tier is $
\pkexpect{|\assocr(T_0)|}{i}  = \int_{\real{2}}\pkpr{i}\left(u\in\assocr(T_0)\right)\mathrm{d} u $, which $ \forall n\neq 0$
\small
\begin{align*}
&=  \pkpr{i}\Bigg(\|u\|^{a_i/\ple{n}} (H(0,u)\power{i})^{-1/\ple{n}} <  \|T_n -u\| (H(n,u)\power{n})^{-1/\ple{n}}\Bigg)\\
& = \pkpr{i}\left(\bigcap_{k=1}^K\tPPP{k}\left(B^o(0, \|u\|^{a_i/a_k}(H(0,u)\power{i})^{-1/a_{k}}\right)=0\right),
\end{align*} 
\normalsize
where $\tPPP{k}$ denotes  the  PPP formed by  transforming the atoms of $\PPP{k}$: $T_n  \to (T_n-u)(H(n,u)\power{k})^{-1/a_k}$. By the  i.i.d. displacement theorem \cite{BacBook09}, $\tPPP{k}$ is a homogeneous PPP with $\tdnsty{k} =  \dnsty{k}\power{k}^{2/a_k}\expect{H_k^{2/a_k}}$, given $\expect{H_k^{2/a_k}} < \infty$, \cite{BlaKee13,MadHCN12}. Thus
\begin{align}
&\pkpr{i}\left(u\in\assocr(T_0)\right)= \int_{0}^\infty\pkpr{i,h}\left(u\in\assocr(T_0)\right)f_{H_i}(h)\mathrm{d}h\nonumber\\
\overset{(a)}{=}&\cexpect{\prod_{k=1}^K \pkpr{i}\left(\tPPP{k}\left(B^o\left(0,\|u\|^{a_i/a_k}(H_i\power{i})^{-1/a_k}\right)\right)=0\right)}{H_i}\nonumber\\
  =&\cexpect{\exp\left(-\pi\sum_{k=1}^K \tdnsty{k} \|u\|^{2a_i/a_k}(H_i\power{i})^{-2/a_k}\right)}{H_i}\label{eq:princell}
\end{align}
\end{proof}
For the case with the path loss exponents of each tier being the  same: $a_k \equiv a$, the mean association area simplifies to 
\begin{equation}\label{eq:marea}
\pkexpect{|\assocr(T_0)|}{i} = \frac{\power{i}^{2/a}\expect{H_i^{2/a}}}{\sum_{k=1}^K \dnsty{k}\power{k}^{2/a}\expect{H_k^{2/a}}}
\end{equation}
and thus depends on only the $\uth{\frac{2}{a}}$ moment of the channel gain. Using Prop. \ref{prop:passoc} for association probability leads to the earlier derived result in \cite{MadHCN12}, which used \textit{propagation invariance}. 

\begin{rem} The framework developed above can be used to compute additive functionals over association cells.
 Using Campbell's theorem \cite{BacBook09}, the mean  of an additive characteristic $g$ associated with a typical association cell of tier $i$ and defined on an independent PPP $\PPP{u}$ of intensity $\dnsty{u}$ is 
\begin{multline}
\bar{S}=\pkexpect{\cexpect{\sum_{j} g(Y_j)\indic(Y_j \in \assocr(T_0))}{\PPP{u}}}{i} \\
=  \int_{\real{2}} g(y)\pkpr{i}(y \in \assocr(T_0))\dnsty{u}\mathrm{d}y\nonumber.
\end{multline}
For example,   if $\PPP{u}$ represents the user point process and $g(x)=\|x\|^{-a}$ (a  path loss function), then $\bar{S}$ represents the mean total power received at  a typical AP of tier $i$ from all the users served by it and is given by (using (\ref{eq:princell}))

\small
\begin{equation*}
2\pi\dnsty{u}\int_{r > 0} r^{-a+1}\cexpect{\exp\left(-\pi\sum_{k=1}^K \tdnsty{k} r^{2a_i/a_k}(H_i\power{i})^{-2/a_k}\right)}{H_i}\mathrm{d}r
\end{equation*}
\normalsize
\end{rem}

\subsection{Numerical Results}
We consider a two tier (macro and pico, say) setup along with max power association with respective transmit powers: $\power{1} =53$ dBm and $\power{2} = 33$ dBm. The variation in the mean association area with the  variation in  density of small cells (second tier) is shown in Fig. \ref{fig:areasigmatrend}, where $\sigma_1 = 2$, $a_1 =a_2=3.5$,  $\dnsty{1} = 1$ BS/sq. Km, and the channel gains are assumed lognormal with $H_k \sim \mathrm{ln} \mathcal{N}(0,\sigma_k)$. It can be seen that  with increasing variance in the channel propagation, the corresponding mean association area increases. This follows from (\ref{eq:marea}) and the fact that $\expect{H_k^{2/a}}= \exp(0.5(2/a)^2\sigma_k^2)$.
The effect of  path loss exponent on the mean association area is shown   in Fig. \ref{fig:areapletrend}. For the plot, $\sigma_1=2$, $\sigma_2 =4$, and $a_1 =3$.  As can be seen, with decreasing path loss exponent of small cells, the corresponding association area increases. Intuitively, the lower the path loss exponent, the lower the decay rate of the corresponding AP's transmission power and hence there will be a larger number of users associating with the same.
\begin{figure}
	\centering
		\includegraphics[width=0.8\columnwidth]{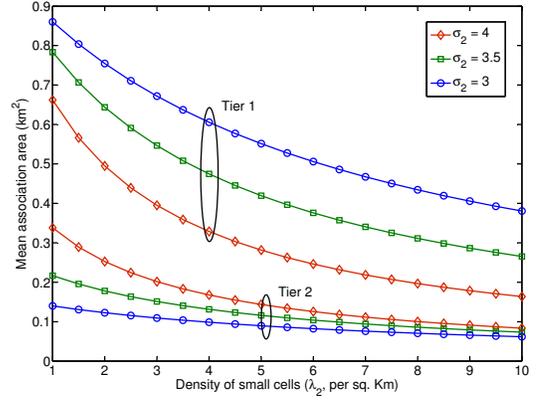}
		\caption{Variation of mean association areas of two tiers with density for different channel gain  variances of second tier}
	\label{fig:areasigmatrend}
\end{figure}
\begin{figure}
	\centering
		\includegraphics[width=0.8\columnwidth]{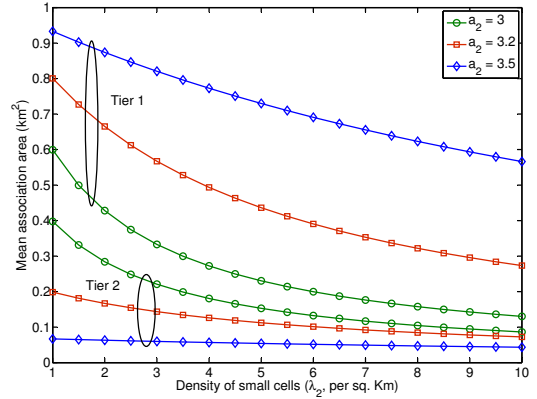}
		\caption{Variation of mean association areas of two tiers with density for different path loss exponents of second tier}
	\label{fig:areapletrend}
\end{figure}
\section{Conclusion}
We introduce the notion of stationary association for random HetNets with the resulting association areas shown to be the marks of the corresponding point process. Analogous to a Voronoi tessellation, an inversion formula relating the association area of the cell containing origin to a typical association area for each tier  is proved. It is  shown that with max power association,  the mean association area of small cells decreases with  path loss exponent and increases with channel gain variance.

\bibliographystyle{ieeetr}
\bibliography{IEEEabrv,C:/Users/Sarabjot/Documents/Dropbox/research/refoffload}

\end{document}